\documentclass[aps,prl,reprint,superscriptaddress,nofootinbib]{revtex4-2}

\usepackage{amsmath}
\usepackage{amssymb}
\usepackage{bm}
\usepackage{physics}
\usepackage{xcolor}

\newtheorem{theorem}{Theorem}
\newtheorem{corollary}{Corollary}

\renewcommand{\Tr}{\operatorname{Tr}}
\newcommand{\G}{\Gamma}
\newcommand{\M}{\mathfrak M}
\newcommand{\Fr}{\mathcal F_R^{\mathfrak M}}
\newcommand{\PiR}{\Pi_R^{\mathfrak M}}
\newcommand{\DeltaW}{\Delta_{\rm WSU}}
\newcommand{\Dinst}{\Delta_{\rm inst}}
\newcommand{\Drec}{\Delta_{\rm rec}}

\begin{document}

\title{Residual Fisher Information in Measurement-Error Uncertainty}

\author{Masayuki Ohzeki}
\email{mohzeki@tohoku.ac.jp}
\affiliation{Department of Physics, Institute of Science Tokyo, Tokyo 152-8551, Japan}
\affiliation{Graduate School of Information Sciences, Tohoku University, Sendai 980-8579, Japan}
\affiliation{Research and Education Institute for Semiconductors and Informatics, Kumamoto University, Kumamoto 860-8555, Japan}
\affiliation{Sigma-i Co., Ltd., Tokyo 108-0075, Japan}
\author{Taiga Suzuki}
\affiliation{Department of Physics, Institute of Science Tokyo, Tokyo 152-8551, Japan}

\date{\today}

\begin{abstract}
Measurement error in the Watanabe-Sagawa-Ueda formulation is the excess estimation error of a classical record over the quantum statistical limit.
We show that, at the quantum-instrument level, this excess has an exact decomposition.
The record-only error matrix equals a residual-aware error matrix plus a positive Schur-complement term fixed by Fisher information left in the unread post-measurement system.
Thus measurement error separates into intrinsic instrument loss and record-restriction loss.
\end{abstract}

\maketitle

{\it Introduction.---}
Uncertainty relations for measurement errors depend on the operational interface assigned to a measurement.
Robertson's inequality follows from Gram-matrix positivity, and Schr\"odinger's refinement keeps the covariance that tilts the uncertainty ellipse \cite{Robertson1929,Schrodinger1930}.
The familiar product form is obtained only after projecting this matrix inequality onto chosen observable axes.
Ozawa's universally valid relation resolved the failure of the naive error-disturbance product by introducing noise and disturbance operators \cite{Ozawa2003}.
This definition is universal, but its operational interpretation as an estimation error is not straightforward.
Watanabe, Sagawa, and Ueda (WSU) instead define measurement error by how well an observable expectation can be inferred from the classical data \cite{WatanabeSagawaUeda2011,Watanabe2014,Nogami2025}.
The Cram\'er-Rao bound then converts this operational definition into inverse Fisher-information matrices; the uncertainty relation follows from comparing those matrices.
We make one further interface explicit.
A POVM gives a record $y$, whereas an instrument realizing that POVM gives $y$ and a conditional residual quantum state $R$.
A classical controller, syndrome decoder, or thermodynamic observer sees only $y$, while a quantum-aware sensor or feedback device may retain $R$.
The record-restriction gap is a positive Schur-complement correction for information retained in $R$.

{\it Estimation-theoretic measurement error.---}
We first recall the Watanabe-Sagawa-Ueda construction in the notation used below.
Let $\{\rho_\theta\}_{\theta\in\Theta}$ be a smooth family of quantum states with coordinates $\theta=(\theta^1,\ldots,\theta^d)$.
A record-only measurement produces outcomes $y$ with probability $p_\theta(y)$.
The classical score is $s_i(y)=\partial_i\log p_\theta(y)$, where $\partial_i:=\partial/\partial\theta^i$, and the record Fisher matrix is
\begin{equation}
    (J_Y)_{ij}
    =
    \sum_y p_\theta(y)s_i(y)s_j(y).
    \label{eq:record-fi}
\end{equation}
For each parameter direction, the symmetric logarithmic derivative (SLD) $L_i^Q$ is the Hermitian operator determined by
\begin{equation}
    \partial_i\rho_\theta
    =
    \frac{1}{2}
    \left(
    L_i^Q\rho_\theta
    +
    \rho_\theta L_i^Q
    \right),
    \label{eq:input-sld}
\end{equation}
Using these SLDs, the input SLD quantum Fisher matrix is
\begin{equation}
    (\mathcal F_Q)_{ij}
    =
    \frac{1}{2}
    \Tr[
    \rho_\theta
    (L_i^QL_j^Q+L_j^QL_i^Q)
    ].
    \label{eq:input-qfi}
\end{equation}
For observables $\bm O=(O_1,\ldots,O_m)$, write their expectation values as $\bm o(\theta)=(\Tr[O_1\rho_\theta],\ldots,\Tr[O_m\rho_\theta])$.
The symbol $D_{\bm O}$ denotes the $m\times d$ Jacobian, with entries $(D_{\bm O})_{\alpha i}=\partial_i o_\alpha(\theta)$, that maps a parameter displacement to the corresponding expectation-value change.
For any locally unbiased estimator $\hat{\bm o}(Y)$ from the record, the Cram\'er-Rao inequality gives
\begin{equation}
    {\rm Cov}_\theta(\hat{\bm o})
    \succeq
    \mathsf E_Y(\bm O),
    \qquad
    \mathsf E_Y(\bm O)
    :=
    D_{\bm O}J_Y^+D_{\bm O}^{T}.
    \label{eq:record-error}
\end{equation}
Here $+$ denotes the Moore--Penrose pseudoinverse on the estimable support.
Thus $\mathsf E_Y$ is the record-only lower-bound matrix for estimating the desired expectation values.
The corresponding input quantum lower-bound matrix is
\begin{equation}
    \mathsf E_Q(\bm O)
    :=
    D_{\bm O}\mathcal F_Q^+D_{\bm O}^{T}.
    \label{eq:quantum-error}
\end{equation}
The record is obtained from the input model by a parameter-independent measurement channel, so SLD quantum Fisher information is monotone under data processing \cite{BraunsteinCaves1994,PetzSudar1996}.
Consequently $J_Y\preceq\mathcal F_Q$ and, on the common estimable support,
\begin{equation}
    \mathsf E_Y(\bm O)
    \succeq
    \mathsf E_Q(\bm O).
    \label{eq:record-qbound}
\end{equation}
WSU identify the excess
\begin{equation}
    \DeltaW(\bm O)
    :=
    \mathsf E_Y(\bm O)-\mathsf E_Q(\bm O)
    \succeq0
    \label{eq:wsu-error-matrix}
\end{equation}
as the estimation-theoretic measurement-error matrix, where $\mathsf E_Y$ and $\mathsf E_Q$ are the lower-bound matrices in Eqs.~(\ref{eq:record-error}) and (\ref{eq:quantum-error}), built from this $D_{\bm O}$.
For $\bm O=(A,B)$, write $D_{AB}:=D_{\bm O}$ and let $e_A=(1,0)^T$ and $e_B=(0,1)^T$ in the observable plane.
The WSU scalar errors are the diagonal projections
\begin{equation}
    \varepsilon_Y(A)
    :=
    e_A^T\DeltaW^{AB}e_A,
    \qquad
    \varepsilon_Y(B)
    :=
    e_B^T\DeltaW^{AB}e_B .
    \label{eq:wsu-scalar-errors}
\end{equation}
The left-hand side of Eq.~(\ref{eq:wsu-determinant-step}) is the product of these two diagonal projections of $\DeltaW^{AB}$.
The commutator bound comes from a stronger complex matrix inequality used in the WSU proof.
Let $\mathcal F_{\rm RLD}$ denote the right-logarithmic-derivative (RLD) Fisher matrix, whose inverse contains the nonsymmetrized quantum correlations.
On the estimable support, the RLD form of the quantum Cram\'er-Rao inequality gives the Hermitian positivity
\begin{equation}
    \mathsf K_{\rm WSU}^{AB}
    :=
    D_{AB}
    \left(
    J_Y^{-1}
    -
    \mathcal F_{\rm RLD}^{-1}
    \right)
    D_{AB}^{T}
    \succeq0 .
    \label{eq:wsu-rld-positive}
\end{equation}
In the observable basis $(A,B)$, the diagonal entries are
$(\mathsf K_{\rm WSU}^{AB})_{AA}=\varepsilon_Y(A)$ and
$(\mathsf K_{\rm WSU}^{AB})_{BB}=\varepsilon_Y(B)$.
Writing the off-diagonal element as $\kappa_{AB}$, its imaginary part is fixed by the commutator,
$|{\rm Im}\,\kappa_{AB}|=|\langle[A,B]\rangle_{\rho_\theta}|/2$.
The determinant condition $\det\mathsf K_{\rm WSU}^{AB}\ge0$ gives
$\varepsilon_Y(A)\varepsilon_Y(B)\ge|\kappa_{AB}|^2$.
Therefore
\begin{equation}
    \varepsilon_Y(A)\varepsilon_Y(B)
    \ge
    |\kappa_{AB}|^2
    \ge
    \frac{1}{4}
    |\langle[A,B]\rangle_{\rho_\theta}|^2 .
    \label{eq:wsu-determinant-step}
\end{equation}
This is the scalar WSU measurement-error uncertainty relation: the commutator term is not obtained from the projection definition alone, but from the RLD-based complex positivity in Eq.~(\ref{eq:wsu-rld-positive}).
Below we resolve each scalar error into instrument and record-restriction contributions.

{\it Instruments and the record-residual split.---}
We now refine the record-only description to a quantum instrument.
A measurement with an accessible outcome $y$ and a residual output system $R$ is described by a family of completely positive maps \cite{DaviesLewis1970,Ozawa1984}.
In its most familiar Kraus form, the outcome-$y$ branch is
\begin{equation}
    \rho_\theta
    \longmapsto
    \sum_\mu
    M_{y\mu}\rho_\theta M_{y\mu}^\dagger .
    \label{eq:kraus-branch}
\end{equation}
The operators $M_{y\mu}$ include the measurement interaction, the readout outcome, and any unobserved microscopic channel label $\mu$.
Trace preservation of the total measurement means $\sum_{y,\mu}M_{y\mu}^\dagger M_{y\mu}=I $.
More generally, we write the outcome branch as a completely positive map
\begin{equation}
    \M_y:\rho_\theta\mapsto \M_y(\rho_\theta)
\end{equation}
from input states to subnormalized states on $R$.
Equation (\ref{eq:kraus-branch}) is the special case
\begin{equation}
    \M_y(\rho_\theta)
    =
    \sum_\mu
    M_{y\mu}\rho_\theta M_{y\mu}^\dagger .
    \label{eq:instrument-kraus}
\end{equation}
The normalization condition is $\sum_y\Tr[\M_y(\rho)]=1$ for all normalized $\rho$.
The probability of observing $y$ is
\begin{equation}
    p_\theta(y)=\Tr[\M_y(\rho_\theta)],
    \label{eq:prob}
\end{equation}
and, for $p_\theta(y)>0$, the conditional residual state is
\begin{equation}
    \eta_{\theta,y}^R
    =
    \frac{\M_y(\rho_\theta)}{p_\theta(y)}.
    \label{eq:conditional}
\end{equation}
If the outcome is stored in an orthonormal classical register $Y$, the full output is the classical-quantum state
\begin{equation}
    \G_\theta^{YR}
    =
    \sum_y
    |y\rangle\!\langle y|
    \otimes
    \M_y(\rho_\theta)
    =
    \sum_y
    p_\theta(y)|y\rangle\!\langle y|
    \otimes
    \eta_{\theta,y}^{R}.
    \label{eq:cq-state}
\end{equation}
The marginal state of the record is
\begin{equation}
    \G_\theta^{Y}
    =
    \sum_y p_\theta(y)|y\rangle\!\langle y|.
\end{equation}
Its SLD Fisher information is exactly the classical record matrix $J_Y$ in Eq.~(\ref{eq:record-fi}).
A record-restricted estimator can use only $\G_\theta^Y$, equivalently the sampled record $Y$.
A residual-aware estimator may also measure or coherently process the conditional system $R$.

{\it Fisher information of an instrument output.---}
The record Fisher matrix $J_Y$ captures only the classical register.
The full classical-quantum output $\G_\theta^{YR}$ can contain more information, because the conditional state $\eta_{\theta,y}^R$ may still depend on the input parameter.
Let $L_{i,y}^R$ be the SLD for the conditional residual state,
\begin{equation}
    \partial_i\eta_{\theta,y}^R
    =
    \frac{1}{2}
    \left(
    L_{i,y}^R\eta_{\theta,y}^R
    +
    \eta_{\theta,y}^RL_{i,y}^R
    \right).
\end{equation}
The SLD of $\G_\theta^{YR}$ is block diagonal in $y$:
\begin{equation}
    L_i^{YR}
    =
    \sum_y
    |y\rangle\!\langle y|
    \otimes
    \left[
    s_i(y)I_R+L_{i,y}^R
    \right].
    \label{eq:sld-block}
\end{equation}
Using $\Tr[\eta_{\theta,y}^RL_{i,y}^R]=0$, the cross terms vanish and the SLD Fisher matrix decomposes as
\begin{equation}
    \mathcal F_{YR}
    =
    J_Y+\Fr,
    \label{eq:qfi-split}
\end{equation}
where
\begin{equation}
    (\Fr)_{ij}
    =
    \sum_y p_\theta(y)
    \frac{1}{2}
    \Tr[
    \eta_{\theta,y}^R
    (L_{i,y}^RL_{j,y}^R+L_{j,y}^RL_{i,y}^R)
    ].
    \label{eq:hidden-residual-fi}
\end{equation}
The second term is the hidden residual Fisher information; Eq.~(\ref{eq:qfi-split}) is the classical--quantum identity of Ref.~\cite{CombesFerrieJiangCaves2014}.
It is not in the classical record, but it is present in the conditional quantum output.
Because $\G_\theta^Y$ is obtained by discarding $R$, and $\G_\theta^{YR}$ is obtained from the input by a parameter-independent channel, Fisher monotonicity gives
\begin{equation}
    J_Y
    \preceq
    \mathcal F_{YR}
    =
    J_Y+\Fr
    \preceq
    \mathcal F_Q .
    \label{eq:fisher-hierarchy}
\end{equation}

{\it Residual-aware error matrices.---}
If the estimator can use the full classical-quantum output $YR$, the corresponding SLD Cram\'er-Rao lower-bound matrix is
\begin{equation}
    \mathsf E_{YR}(\bm O)
    :=
    D_{\bm O}
    (J_Y+\Fr)^+
    D_{\bm O}^{T}.
    \label{eq:residual-aware-error}
\end{equation}
Thus $\mathsf E_Y$ is the error lower bound for the record interface, while $\mathsf E_{YR}$ is the lower bound for a residual-aware quantum interface.

\begin{theorem}[Instrument-resolved WSU error]
On the common estimable support of $J_Y$, $J_Y+\Fr$, and $\mathcal F_Q$,
\begin{equation}
    \mathsf E_Q(\bm O)
    \preceq
    \mathsf E_{YR}(\bm O)
    \preceq
    \mathsf E_Y(\bm O)
    \label{eq:error-hierarchy}
\end{equation}
and the record-only lower-bound matrix decomposes as
\begin{equation}
    \mathsf E_Y(\bm O)
    =
    \mathsf E_{YR}(\bm O)
    +
    \PiR(\bm O),
    \qquad
    \PiR(\bm O)\succeq0 .
    \label{eq:error-split}
\end{equation}
If $J_Y$ is nonsingular and $\Fr=LL^T$, then
\begin{align}
    \PiR(\bm O)
    &=
    D_{\bm O}J_Y^{-1}L
    (I+L^TJ_Y^{-1}L)^{-1}
    L^TJ_Y^{-1}D_{\bm O}^{T}.
    \label{eq:schur-correction}
\end{align}
Consequently the WSU measurement-error matrix decomposes as
\begin{equation}
    \DeltaW(\bm O)
    =
    \Dinst(\bm O)
    +
    \Drec(\bm O),
    \label{eq:delta-decomposition}
\end{equation}
where
\begin{equation}
    \Dinst(\bm O)
    :=
    \mathsf E_{YR}(\bm O)-\mathsf E_Q(\bm O),
    \qquad
    \Drec(\bm O):=\PiR(\bm O).
    \label{eq:delta-definitions}
\end{equation}
Both terms are positive semidefinite.
\end{theorem}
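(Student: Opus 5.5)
The plan is to reduce everything to the Fisher hierarchy in Eq.~(\ref{eq:fisher-hierarchy}) and to operator monotonicity of the (pseudo)inverse on a common support. I will first restrict to the common estimable support $\mathcal S$ of $J_Y$, $J_Y+\Fr$ and $\mathcal F_Q$. On $\mathcal S$ these three matrices act as positive definite operators, and their pseudoinverses coincide with their inverses restricted to $\mathcal S$, with $D_{\bm O}$ understood as acting on $\mathcal S$. Since $A\mapsto A^{-1}$ is operator-decreasing on positive definite matrices, Eq.~(\ref{eq:fisher-hierarchy}) gives
\[
\mathcal F_Q^{-1}\preceq (J_Y+\Fr)^{-1}\preceq J_Y^{-1}
\]
on $\mathcal S$. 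Congruence by $D_{\bm O}$ preserves the Loewner order, which yields Eq.~(\ref{eq:error-hierarchy}). I will then \emph{define}
\[
\PiR(\bm O):=\mathsf E_Y(\bm O)-\mathsf E_{YR}(\bm O)=D_{\bm O}\bigl[J_Y^{+}-(J_Y+\Fr)^{+}\bigr]D_{\bm O}^T .
\]
Its positivity is exactly the right half of Eq.~(\ref{eq:error-hierarchy}), so Eq.~(\ref{eq:error-split}) follows.

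For the explicit form (\ref{eq:schur-correction}), I assume $J_Y$ is nonsingular and factor $\Fr=LL^T$, which is possible because $\Fr\succeq0$; for instance, $L=\Fr^{1/2}$ or a rank-revealing factor. The Woodbury identity then gives
\[
(J_Y+LL^T)^{-1}=J_Y^{-1}-J_Y^{-1}L(I+L^TJ_Y^{-1}L)^{-1}L^TJ_Y^{-1},
\]
where $I+L^TJ_Y^{-1}L\succ0$ is invertible. Substituting this into the definition of $\PiR$ gives Eq.~(\ref{eq:schur-correction}) directly. This representation also exhibits positivity independently, since it has the form $X^T (I+L^TJ_Y^{-1}L)^{-1} X$ with $X=L^TJ_Y^{-1}D_{\bm O}^T$. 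I will also remark on the Schur-complement interpretation. The bracket is the Schur complement of the block $I+L^TJ_Y^{-1}L$ in the positive block matrix
\[
\begin{pmatrix}J_Y^{-1} & J_Y^{-1}L\\ L^TJ_Y^{-1} & I+L^TJ_Y^{-1}L\end{pmatrix},
\]
which justifies the name used in the abstract.

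Finally, Eq.~(\ref{eq:delta-decomposition}) is a telescoping identity:
\[
\DeltaW=\mathsf E_Y-\mathsf E_Q=(\mathsf E_{YR}-\mathsf E_Q)+(\mathsf E_Y-\mathsf E_{YR}).
\]
The first term $\Dinst$ is positive semidefinite by the left half of Eq.~(\ref{eq:error-hierarchy}), and the second term $\Drec=\PiR$ is positive semidefinite by the argument above.

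The main obstacle I expect is the support bookkeeping in the singular case. Inverse antitonicity fails for pseudoinverses when the ranges differ. For example, $J_Y\preceq \mathcal F_Q$ does not imply $\mathcal F_Q^+\preceq J_Y^+$ globally when $J_Y$ has a larger kernel. I will handle this by working explicitly with the orthogonal projector $P$ onto $\mathcal S$. I will replace each matrix $A$ by its compression $PAP$, which is invertible on $\mathcal S$, and note that the Fisher ordering survives compression. I will then interpret $D_{\bm O}$ as $D_{\bm O}P$, which is the meaning of ``on the common estimable support'' in the statement. Everything else is routine linear algebra.
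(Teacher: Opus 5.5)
For $J_Y\succ0$ your argument is correct and is the paper's proof in a different order. The paper reads $\PiR\succeq0$ off the positive middle factor of the Woodbury formula and gets Eq.~(\ref{eq:error-hierarchy}) from Eq.~(\ref{eq:fisher-hierarchy}) by inversion. You instead get $\PiR\succeq0$ from inverse antitonicity and use Woodbury only for the explicit form; the telescoping step is the same.

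The step that fails is your singular-case reduction. You are right that antitonicity fails globally for pseudoinverses. However, the claim that the pseudoinverses ``coincide with their inverses restricted to $\mathcal S$'' holds only if $\mathcal S$ is an invariant subspace of each matrix. Replacing $A$ by $PAP$ does not fix this, because inverting a compression is not the same as compressing the pseudoinverse. By the block-inverse (Schur-complement) formula, $PA^+P\succeq(PAP)^+$, with equality only when $PA(I-P)=0$. Concretely, take $J_Y=\mathrm{diag}(1,0)$, $\mathcal F_Q=\begin{pmatrix}2&1\\1&1\end{pmatrix}\succeq J_Y$ and $D_{\bm O}=(1,0)$. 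Then $\mathsf E_Y=\mathsf E_Q=1$, so $\DeltaW=0$. But $D_{\bm O}(P\mathcal F_QP)^+D_{\bm O}^T=1/2$, so your $\DeltaW$ would be $1/2$. Your route therefore proves the chain for bounds in which the parameter directions outside $\mathcal S$ are treated as known. It does not prove it for $\mathsf E_{YR}$ and $\mathsf E_Q$ as defined in Eqs.~(\ref{eq:residual-aware-error}) and (\ref{eq:quantum-error}).

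The repair is to invert first and compress afterwards. The Loewner chain gives $\ker\mathcal F_Q\subseteq\ker(J_Y+\Fr)\subseteq\ker J_Y$, so $\mathcal S=\operatorname{ran}J_Y$ is orthogonal to the kernels of all three matrices. For $v\in\operatorname{ran}A$ one has $v^TA^+v=\sup_x(2v^Tx-x^TAx)$, which is manifestly antitone in $A$. Equivalently, sandwich $(\mathcal F_Q+\epsilon I)^{-1}\preceq(J_Y+\Fr+\epsilon I)^{-1}\preceq(J_Y+\epsilon I)^{-1}$ by $P$ and let $\epsilon\to0^+$. Either way you get $P\mathcal F_Q^+P\preceq P(J_Y+\Fr)^+P\preceq J_Y^+$. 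With $D_{\bm O}=D_{\bm O}P$, this is Eq.~(\ref{eq:error-hierarchy}) for the literal pseudoinverse definitions, and the rest of your argument goes through unchanged. The paper itself only says the singular case follows by restricting to the common support, so you would be supplying the missing detail, but it has to be this restriction.

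A smaller slip concerns your Schur-complement remark. The Schur complement of $I+L^TJ_Y^{-1}L$ in your block matrix is $(J_Y+\Fr)^{-1}$, not the bracket $J_Y^{-1}-(J_Y+\Fr)^{-1}$. The paper's Schur-complement statement is about the middle factor, $(I+L^TJ_Y^{-1}L)^{-1}=I-L^T(J_Y+\Fr)^{-1}L$. This remark is not needed for the proof.
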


\textit{Proof.---}
For clarity take $J_Y>0$; the singular case follows by restricting to the common support.
The Woodbury identity gives
\begin{align}
    &(J_Y+LL^T)^{-1}
    \nonumber\\
    &\quad
    =
    J_Y^{-1}
    -
    J_Y^{-1}L
    (I+L^TJ_Y^{-1}L)^{-1}
    L^TJ_Y^{-1}.
    \label{eq:woodbury-step}
\end{align}
Subtracting this expression from $J_Y^{-1}$ yields
\begin{equation}
    J_Y^{-1}-(J_Y+\Fr)^{-1}
    =
    J_Y^{-1}L
    (I+L^TJ_Y^{-1}L)^{-1}
    L^TJ_Y^{-1}.
    \label{eq:inverse-difference}
\end{equation}
The middle factor in Eq.~(\ref{eq:inverse-difference}) is the Schur complement $I-L^T(J_Y+\Fr)^{-1}L$ of $J_Y+\Fr$.
Multiplying Eq.~(\ref{eq:inverse-difference}) by $D_{\bm O}$ and $D_{\bm O}^T$ gives Eq.~(\ref{eq:schur-correction}).
The middle matrix is positive, hence $\PiR(\bm O)\succeq0$.
The full hierarchy in Eq.~(\ref{eq:error-hierarchy}) follows from the Fisher hierarchy in Eq.~(\ref{eq:fisher-hierarchy}) and inversion on the common support.
Equation~(\ref{eq:delta-decomposition}) is then obtained by subtracting $\mathsf E_Q$ from Eq.~(\ref{eq:error-split}).
Thus WSU's record-only measurement error is not a single loss: it contains intrinsic instrument loss plus record-restriction loss.

{\it Measurement uncertainty.---}
We now translate the error-matrix identity into an uncertainty relation.
It is useful first to recall the matrix structure already present in the standard uncertainty relation.
For centered observables $\Delta A=A-\langle A\rangle$ and $\Delta B=B-\langle B\rangle$, define
\begin{equation}
    G_\rho^{AB}
    =
    \begin{pmatrix}
    \langle(\Delta A)^2\rangle_\rho
    &
    \langle\Delta A\Delta B\rangle_\rho\\
    \langle\Delta B\Delta A\rangle_\rho
    &
    \langle(\Delta B)^2\rangle_\rho
    \end{pmatrix}.
    \label{eq:preparation-gram}
\end{equation}
This Gram matrix is positive because
\begin{equation}
    \left\langle
    (c_A\Delta A+c_B\Delta B)^\dagger
    (c_A\Delta A+c_B\Delta B)
    \right\rangle_\rho
    \ge0
    \label{eq:gram-positive}
\end{equation}
for arbitrary complex $c_A,c_B$.
Its off-diagonal element separates into a symmetric covariance and an antisymmetric commutator term,
\begin{equation}
    \langle\Delta A\Delta B\rangle_\rho
    =
    {\rm Cov}_\rho(A,B)
    +
    \frac{1}{2}\langle[A,B]\rangle_\rho ,
    \label{eq:cov-comm-split}
\end{equation}
where ${\rm Cov}_\rho(A,B)
    =
    \langle \Delta A\Delta B+\Delta B\Delta A\rangle_\rho/2$.
For Hermitian $A$ and $B$, ${\rm Cov}_\rho(A,B)$ is real, whereas $\langle[A,B]\rangle_\rho$ is purely imaginary.
Thus the determinant condition $\det G_\rho^{AB}\ge0$ gives
\begin{align}
    (\Delta A)^2_\rho(\Delta B)^2_\rho
    -
    {\rm Cov}_\rho(A,B)^2
    -
    \frac{1}{4}|\langle[A,B]\rangle_\rho|^2 \ge 0.
    \label{eq:gram-det-step}
\end{align}
The Robertson--Schr\"odinger relation is the determinant bound
\begin{equation}
    \det V_\rho^{AB}
    =
    (\Delta A)^2_\rho(\Delta B)^2_\rho
    -
    {\rm Cov}_\rho(A,B)^2
    \ge
    \frac{1}{4}
    |\langle[A,B]\rangle_\rho|^2 ,
    \label{eq:robertson-schrodinger}
\end{equation}
where the real covariance matrix is
\begin{equation}
    V_\rho^{AB}
    =
    \begin{pmatrix}
    (\Delta A)^2_\rho & {\rm Cov}_\rho(A,B)\\
    {\rm Cov}_\rho(A,B) & (\Delta B)^2_\rho
    \end{pmatrix}.
    \label{eq:preparation-covariance}
\end{equation}
For a canonical pair $[Q,P]=i$, we obtain $\det V_\rho^{QP}\ge1/4$.
The frequently quoted product inequality follows by dropping the nonnegative covariance contribution.

The determinant is the coordinate-independent error area, while the product of standard deviations is an axiswise projection.
Our measurement-error relation has the same matrix character, but the covariance matrix is replaced by a Cram\'er-Rao error matrix determined by the accessible interface.
For this observable pair, denote the resulting $2\times2$ error block by the superscript $AB$.
Define the input quantum error area by
\begin{equation}
    B_{AB}(\theta)
    :=
    \sqrt{\det\mathsf E_Q^{AB}}.
    \label{eq:bab-definition}
\end{equation}
In canonical minimum-uncertainty cases this reduces to
$B_{AB}=|\langle[A,B]\rangle|/2$, but in general $B_{AB}$ is the determinant of the quantum Cram\'er-Rao lower-bound block.
\begin{corollary}[Error-area hierarchy]
The hierarchy in Theorem~1 immediately gives the matrix uncertainty hierarchy
\begin{equation}
    \det\mathsf E_Y^{AB}
    =
    \det\!\left[
    \mathsf E_{YR}^{AB}
    +
    \Pi_R^{AB}
    \right]
    \ge
    \det\mathsf E_{YR}^{AB}
    \ge
    B_{AB}^2.
    \label{eq:error-area-hierarchy}
\end{equation}
\end{corollary}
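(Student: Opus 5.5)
The corollary follows from restricting the matrix statements of Theorem~1 to the $2\times2$ block for $\bm O=(A,B)$, combined with monotonicity of the determinant on the positive semidefinite cone. The plan has three steps.

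\emph{Step 1: the equality.} Equation~(\ref{eq:error-split}) holds for an arbitrary observable tuple. Taking $\bm O=(A,B)$ with $D_{\bm O}=D_{AB}$ gives $\mathsf E_Y^{AB}=\mathsf E_{YR}^{AB}+\Pi_R^{AB}$ with $\Pi_R^{AB}\succeq0$. Taking determinants of both sides yields the first equality in Eq.~(\ref{eq:error-area-hierarchy}).

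\emph{Step 2: the two inequalities.} Both follow from one lemma: if $0\preceq X\preceq Y$ are real symmetric $2\times2$ matrices, then $\det X\le\det Y$. The first inequality applies this with $X=\mathsf E_{YR}^{AB}$ and $Y=\mathsf E_{YR}^{AB}+\Pi_R^{AB}$. The second applies it with $X=\mathsf E_Q^{AB}$ and $Y=\mathsf E_{YR}^{AB}$, using Eq.~(\ref{eq:error-hierarchy}). Since $\mathsf E_Q^{AB}=D_{AB}\mathcal F_Q^+D_{AB}^T\succeq0$, its determinant is nonnegative, so $B_{AB}^2=\det\mathsf E_Q^{AB}$ by the definition in Eq.~(\ref{eq:bab-definition}).

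\emph{Step 3: proving the lemma.}
\begin{itemize}
\item If $\det X=0$, the claim is trivial because $\det Y\ge0$ for $Y\succeq0$.
\item If $X\succ0$, write $Y=X^{1/2}(I+X^{-1/2}(Y-X)X^{-1/2})X^{1/2}$. The inner matrix $X^{-1/2}(Y-X)X^{-1/2}$ is positive semidefinite, so its eigenvalues are nonnegative. Hence $\det Y=\det X\,\prod_k(1+\lambda_k)\ge\det X$, where $\lambda_k\ge0$ are those eigenvalues.
\end{itemize}

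The only delicate point is the support issue inherited from Theorem~1, since pseudoinverses appear when $J_Y$ is singular. Two facts need checking. First, the ordering in Eq.~(\ref{eq:error-hierarchy}) is stated on the common estimable support, where all three error blocks are well-defined positive semidefinite matrices. Second, if $D_{AB}$ has rows outside that support, the record-only error is effectively infinite, so the corollary holds trivially in the extended sense. I would therefore state the corollary on the common estimable support, as in Theorem~1, and then chain the determinant inequalities above.
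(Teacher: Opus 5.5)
Your proposal is correct and takes the same route as the paper. The paper treats the corollary as immediate from Theorem~1 restricted to the $(A,B)$ block, i.e.\ from $\mathsf E_Y^{AB}=\mathsf E_{YR}^{AB}+\Pi_R^{AB}$ with $\Pi_R^{AB}\succeq0$ and $\mathsf E_Q^{AB}\preceq\mathsf E_{YR}^{AB}$, which tacitly uses monotonicity of the determinant on the positive semidefinite cone; you make that lemma explicit with a valid proof and impose the same common-support qualification the paper uses in Theorem~1.
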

This is the matrix version of the usual uncertainty-area statement, now resolved by the available measurement interface.

\begin{corollary}[Scalar WSU correction]
Define the scalar projections
$\varepsilon_{\rm inst}(A)=e_A^T\Dinst^{AB}e_A$ and
$\pi_R(A)=e_A^T\Drec^{AB}e_A$, and similarly for $B$.
Then the WSU scalar errors in Eq.~(\ref{eq:wsu-scalar-errors}) decompose as
\begin{align}
    \varepsilon_Y(A)
    =
    \varepsilon_{\rm inst}(A)+\pi_R(A),
    \nonumber\\
    \varepsilon_Y(B)
    =
    \varepsilon_{\rm inst}(B)+\pi_R(B),
    \label{eq:scalar-delta-split}
\end{align}
Since $\Drec^{AB}=\Pi_R^{AB}$, Eq.~(\ref{eq:schur-correction}) makes these residual penalties directly computable:
$\pi_R(A)=e_A^TD_{AB}[J_Y^{-1}-(J_Y+\Fr)^{-1}]D_{AB}^{T}e_A$, with $\pi_R(B)$ obtained by replacing $e_A$ with $e_B$.
Since $\Dinst^{AB}\succeq0$, the same split also yields a residual-improved scalar lower bound.
The record-only product obeys
\begin{equation}
    \varepsilon_Y(A)\varepsilon_Y(B)
    \ge
    \max\{C_{AB},R_{AB}\}.
    \label{eq:corrected-wsu-product}
\end{equation}
\end{corollary}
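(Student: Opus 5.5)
The scalar split in Eq.~(\ref{eq:scalar-delta-split}) is a direct consequence of Theorem~1. By Eq.~(\ref{eq:delta-decomposition}) restricted to the block $\bm O=(A,B)$, we have $\DeltaW^{AB}=\Dinst^{AB}+\Drec^{AB}$. The map $X\mapsto e_A^TXe_A$ is linear. Applying it to both sides and using the definition Eq.~(\ref{eq:wsu-scalar-errors}) gives $\varepsilon_Y(A)=\varepsilon_{\rm inst}(A)+\pi_R(A)$, and likewise for $B$. The explicit formula for $\pi_R(A)$ follows by projecting Eq.~(\ref{eq:inverse-difference}) between $D_{AB}^Te_A$ and its transpose. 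Since $\Drec^{AB}=\PiR$, this is the same as projecting Eq.~(\ref{eq:schur-correction}).

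For the product bound, I would treat the two entries of the maximum separately and combine them at the end.

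\emph{The $C_{AB}$ entry.} I read $C_{AB}$ as the WSU commutator bound, $C_{AB}=\frac14|\langle[A,B]\rangle_{\rho_\theta}|^2$, or more sharply $|\kappa_{AB}|^2$. This inequality is exactly Eq.~(\ref{eq:wsu-determinant-step}). Nothing new is needed here, because the left-hand side $\varepsilon_Y(A)\varepsilon_Y(B)$ is unchanged by the decomposition; the split only rewrites it.

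\emph{The $R_{AB}$ entry.} I read $R_{AB}$ as the residual-improved bound built from the Schur-complement penalties. The key input is that both $\Dinst^{AB}$ and $\Drec^{AB}$ are positive semidefinite by Theorem~1. Hence all four diagonal projections $\varepsilon_{\rm inst}(A)$, $\varepsilon_{\rm inst}(B)$, $\pi_R(A)$, $\pi_R(B)$ are nonnegative. Expanding the product then gives
\begin{equation}
    \varepsilon_Y(A)\varepsilon_Y(B)
    =
    \left[\varepsilon_{\rm inst}(A)+\pi_R(A)\right]
    \left[\varepsilon_{\rm inst}(B)+\pi_R(B)\right].
    \nonumber
\end{equation}
Every term in this expansion is nonnegative, so the product dominates $\pi_R(A)\pi_R(B)$. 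It also dominates any partial sum such as $\pi_R(A)\pi_R(B)+\varepsilon_{\rm inst}(A)\pi_R(B)+\pi_R(A)\varepsilon_{\rm inst}(B)$.

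If $R_{AB}$ is meant to retain the commutator information as well, I would instead use the $2\times2$ determinant. By positivity of $\Dinst^{AB}$ one has $\varepsilon_{\rm inst}(A)\varepsilon_{\rm inst}(B)\ge|(\Dinst^{AB})_{AB}|^2$. By positivity of $\Drec^{AB}$ one has $\pi_R(A)\pi_R(B)\ge|(\Drec^{AB})_{AB}|^2$. The two cross terms are then handled by AM--GM, $\varepsilon_{\rm inst}(A)\pi_R(B)+\pi_R(A)\varepsilon_{\rm inst}(B)\ge 2\sqrt{\varepsilon_{\rm inst}(A)\varepsilon_{\rm inst}(B)\pi_R(A)\pi_R(B)}$. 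Together these give $\varepsilon_Y(A)\varepsilon_Y(B)\ge\left(\sqrt{\varepsilon_{\rm inst}(A)\varepsilon_{\rm inst}(B)}+\sqrt{\pi_R(A)\pi_R(B)}\right)^2$. I would take this as the candidate $R_{AB}$.

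Since $\varepsilon_Y(A)\varepsilon_Y(B)$ dominates both $C_{AB}$ and $R_{AB}$, it dominates their maximum, which is Eq.~(\ref{eq:corrected-wsu-product}).

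The main obstacle is pinning down $R_{AB}$ so that it is a genuine improvement and not merely a restatement. In particular, the residual-aware analogue of the commutator bound for the instrument term requires an RLD positivity for $\mathsf E_{YR}-\mathcal F_{\rm RLD}^{-1}$. That positivity holds by the same monotonicity used for $J_Y$, applied now to $\mathcal F_{YR}\preceq\mathcal F_Q$. Combining it with the scalar split, as above, is the step I expect to need the most care.
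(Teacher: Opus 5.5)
Your argument is correct and follows the paper's route. The scalar split and the formula for $\pi_R(A)$ come from projecting the matrix identities of Theorem~1, the $C_{AB}$ entry is the WSU relation, and the second entry comes from expanding the product and applying AM--GM to the cross terms.

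The only difference is in the second entry. The paper's $R_{AB}$ is $[\sqrt{\pi_R(A)\pi_R(B)}+\sqrt{\det\Dinst^{AB}}]^2$, which replaces $\varepsilon_{\rm inst}(A)\varepsilon_{\rm inst}(B)$ by $\det\Dinst^{AB}$. Your $(\sqrt{\varepsilon_{\rm inst}(A)\varepsilon_{\rm inst}(B)}+\sqrt{\pi_R(A)\pi_R(B)})^2$ is therefore slightly stronger. It gives the paper's bound at once via $\varepsilon_{\rm inst}(A)\varepsilon_{\rm inst}(B)\ge\det\Dinst^{AB}\ge0$, where the last inequality is the positivity fact you already wrote down.

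Delete your closing paragraph, though, because its claim is false. You assert that $D_{AB}(\mathcal F_{YR}^{-1}-\mathcal F_{\rm RLD}^{-1})D_{AB}^T\succeq0$ follows from $\mathcal F_{YR}\preceq\mathcal F_Q$. That comparison is between real symmetric SLD matrices. It says nothing about the imaginary, commutator-carrying part of $\mathcal F_{\rm RLD}^{-1}$.

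The Hermitian inequality holds for $J_Y$ only because $J_Y$ is the Fisher information of a classical record. There SLD and RLD coincide, so RLD monotonicity applies. Your reasoning would apply equally to $\mathcal F_{YR}=\mathcal F_Q$ and would give $\mathcal F_Q^{-1}\succeq\mathcal F_{\rm RLD}^{-1}$. That fails, for example, for displaced thermal states, where the difference is $\mp\tfrac{i}{2}$ times the symplectic form.

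The paper's own example also rules it out. If the positivity held, the determinant argument used for $\mathsf K_{\rm WSU}^{AB}$ would give $\varepsilon_{\rm inst}(A)\varepsilon_{\rm inst}(B)\ge C_{AB}$. Yet the optical example has $\varepsilon_{\rm inst}(Q)\varepsilon_{\rm inst}(P)=[\kappa/(2(2-\kappa))]^2<1/4=C_{QP}$ for $0<\kappa<1$. This is why the paper uses $C_{AB}$ only as a record-only bound and builds $R_{AB}$ from positivity alone, with no commutator input. Your actual chain of inequalities never uses the false step, so the proof stands once that paragraph is removed.
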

Here $C_{AB}=|\langle[A,B]\rangle_{\rho_\theta}|^2/4$ and
$R_{AB}=[\sqrt{\pi_R(A)\pi_R(B)}+\sqrt{\det\Dinst^{AB}}]^2$.
The second bound follows by expanding the left-hand side.
For a positive $2\times2$ matrix, the product of the two diagonal entries is at least the determinant; hence
\begin{equation}
    \varepsilon_{\rm inst}(A)\varepsilon_{\rm inst}(B)
    \ge
    \det\Dinst^{AB}.
    \label{eq:inst-det-bound}
\end{equation}
The cross terms satisfy the arithmetic--geometric mean inequality $u+v\ge2\sqrt{uv}$, applied to
$u=\varepsilon_{\rm inst}(A)\pi_R(B)$ and
$v=\varepsilon_{\rm inst}(B)\pi_R(A)$.
These two facts give $R_{AB}$.
The retained-residual excess product
$\varepsilon_{\rm inst}(A)\varepsilon_{\rm inst}(B)$ need not exceed $C_{AB}$; Eq.~(\ref{eq:corrected-wsu-product}) is a record-only bound.

As a physical example, consider single-mode displacement sensing.
Let $Q$ and $P$ be quadratures with $[Q,P]=i$, and take a coherent input for which $(\Delta Q)^2=(\Delta P)^2=1/2$ and ${\rm Cov}(Q,P)=0$.
A phase-space displacement is
\begin{equation}
    \rho_{q,p}
    =
    e^{-iqP+ipQ}\rho_0 e^{iqP-ipQ}.
    \label{eq:displaced-state}
\end{equation}
For an input centered at the origin, $q=\langle Q\rangle_{\rho_{q,p}}$ and $p=\langle P\rangle_{\rho_{q,p}}$.
For a pure-state unitary model generated by $G$, the quantum Fisher information is $4(\Delta G)^2$, where $(\Delta G)^2:=\langle G^2\rangle-\langle G\rangle^2$.
The generators of $q$ and $p$ are $P$ and $-Q$, so $\mathcal F_Q(q,p)=2I_2$.
A beam splitter of reflectivity $0<\kappa\le1$ sends the reflected port to a heterodyne detector, producing $Y=(Y_Q,Y_P)$, while the transmitted port is kept as $R$.
With heterodyne noise normalized to unit covariance,
$Y\sim
    \mathcal N\!\left(\sqrt{\kappa}(q,p),I_2\right),
$ and hence $J_Y=\kappa I_2$.
The transmitted mode is displaced by $\sqrt{1-\kappa}(q,p)$, giving $\Fr=2(1-\kappa)I_2$.
Taking $\bm O=(Q,P)$ and $D_{\bm O}=I_2$, the record-only and residual-aware error matrices are
\begin{equation}
    \mathsf E_Y^{QP}
    =
    \frac{1}{\kappa}I_2,
    \qquad
    \mathsf E_{YR}^{QP}
    =
    \frac{1}{2-\kappa}I_2.
    \label{eq:optical-errors}
\end{equation}
The positive correction is therefore
\begin{equation}
    \Pi_R^{QP}
    =
    \frac{2(1-\kappa)}
    {\kappa(2-\kappa)}
    I_2 .
    \label{eq:optical-pi}
\end{equation}
Projecting this matrix onto the two quadrature axes gives $\pi_R(Q)=\pi_R(P)=2(1-\kappa)/[\kappa(2-\kappa)]$.
The input quantum lower-bound matrix is $\mathsf E_Q^{QP}=I_2/2$.
Therefore
\begin{equation}
    \Dinst^{QP}
    =
    \frac{\kappa}{2(2-\kappa)}I_2,
    \qquad
    \Drec^{QP}
    =
	    \frac{2(1-\kappa)}{\kappa(2-\kappa)}I_2 .
    \label{eq:optical-deltas}
\end{equation}
Thus $\varepsilon_{\rm inst}(Q)\varepsilon_{\rm inst}(P)=[\kappa/(2(2-\kappa))]^2<C_{QP}=1/4$ for $0<\kappa<1$.
By contrast, the record-only WSU excess-error matrix is
\begin{equation}
    \DeltaW^{QP}
    =
    \frac{2-\kappa}{2\kappa}I_2
    =
    \Dinst^{QP}
    +
    \Drec^{QP}.
    \label{eq:optical-wsu-error}
\end{equation}
The scalar WSU product is therefore
\begin{equation}
    \varepsilon_Y(Q)\varepsilon_Y(P)
    =
    R_{QP}
    =
    \left(\frac{2-\kappa}{2\kappa}\right)^2
    \ge
    C_{QP}
    =
    \frac{1}{4},
    \label{eq:optical-wsu-product}
\end{equation}
Thus the residual-improved bound in Corollary~2 is saturated, while equality with the commutator bound occurs at $\kappa=1$.
For $0<\kappa<1$, the extra WSU error is record-restriction loss; the missing precision remains in the transmitted mode.

{\it Discussion.---}
The decomposition is a design rule: $\Dinst$ is irreversible instrument loss, whereas $\Drec=\PiR$ is precision left in an unread residual system.
It separates record-only sensing, syndrome decoding, thermodynamic-trajectory estimates, and sequential error--disturbance settings, where the disturbed post-measurement state is a residual output \cite{Watanabe2014}.
Instrument-level information also appears in information erasure and quantum imprint \cite{SuzukiItoOhzeki2026} and is also relevant to weak-value protocols.
The same access hierarchy applies to the multiparameter Holevo--Cram\'er--Rao cost \cite{Holevo2011}, although incompatible observables may prevent simultaneous use of the residual information.

\begin{acknowledgments}
We received financial support from the Cross-ministerial Strategic Innovation Promotion Program (SIP) of the Cabinet Office (No. 23836436).
\end{acknowledgments}

\bibliographystyle{apsrev4-2}
\bibliography{references}

\end{document}